\newcommand{\dqt}{{DQT}\xspace}
\newcommand{\fqt}{{FQT}\xspace}
\newtheorem{example}{Example}
\newtheorem{theorem}{Theorem}
\newtheorem{lemma}{Lemma}
\newcommand{\GF}[1][vide]{\ifthenelse{\equal{#1}{vide}}{}{\ensuremath{\mathtt {GF}(#1)}}}
\newcommand{\N}{\ensuremath{\mathbb N}}
\newcommand{\Z}{\ensuremath{\mathbb Z}}
\newcommand{\pF}[1][vide]{\ifthenelse{\equal{#1}{vide}}{\Z}{\leavevmode
	\kern.1em\raise.0ex \hbox{\Z}\kern-.1em /\kern-.15em\lower.3ex
         \hbox{#1\mbox{\Z}}}
}
\title{Q-adic Transform Revisited}
\newcommand{\alignauthor}{}
\newcommand{\affaddr}[1]{{\em #1}}
\newcommand{\email}[1]{\url{#1}}
\author{
\alignauthor Jean-Guillaume Dumas\\
\affaddr{Universit\'e de Grenoble,}
\affaddr{Laboratoire J. Kuntzmann,}
\affaddr{umr CNRS 5224.}\\
\affaddr{BP 53X, 51, rue des Math\'ematiques.}
\affaddr{F38041 Grenoble, France.}\\
\email{Jean-Guillaume.Dumas@imag.fr}
}
\begin{document}
\maketitle
\begin{abstract} We present an algorithm to perform a simultaneous
  modular reduction of several residues. This enables to compress
  polynomials into integers and perform several modular operations with
  machine integer arithmetic. The
  idea is to convert the $X$-adic representation of modular
  polynomials, with $X$ an indeterminate, to a $q$-adic representation
  where $q$ is an integer larger than the field characteristic. 
  With some control on the different involved sizes it
  is then possible to perform some of the $q$-adic arithmetic directly
  with machine integers or floating points. Depending also on the number of
  performed numerical operations one can then convert back to the
  $q$-adic or $X$-adic representation and eventually mod out high
  residues.
  In this note we present a new version of both conversions: more
  tabulations and a way to reduce the number of divisions involved in
  the process are presented. The polynomial multiplication is then
  applied to arithmetic and linear algebra in small finite field extensions.
\end{abstract}



\noindent
\subsection*{\bf Keywords:} Kronecker substitution ; Finite field ; Modular Polynomial
  Multiplication ; REDQ (simultaneous modular reduction) ; Small
  extension field ; DQT (Discrete Q-adic Transform) ; FQT (Fast Q-adic
  Transform).
\section{Introduction}
The FFLAS/FFPACK project has demonstrated the usefulness of
wrapping cache-aware routines for efficient small finite field
linear algebra \cite{jgd:2002:fflas,jgd:2004:ffpack}.
 
A conversion between a modular representation of prime fields
and e.g. floating points used exactly is natural. It uses the homomorphism
to the integers. Now for extension
fields (isomorphic to polynomials over a prime field) such a
conversion is not direct. In \cite{jgd:2002:fflas} we proposed 
transforming the polynomials into a $q$-adic representation where $q$ is
an integer larger than the field characteristic. We call this
transformation \dqt for Discrete Q-adic Transform, it is a form of
Kronecker substitution \cite[\S 8.4]{VonzurGathen:1999:MCA}. 
With some care, in particular on
the size of $q$, it is possible to map the operations in the extension
field into the floating point arithmetic realization of this $q$-adic
representation and convert back using an inverse \dqt.

In this note we propose some implementation
improvements: we propose to use a tabulated discrete
logarithm for the \dqt and we give a trick to reduce the number of machine
divisions involved in the inverse. 
This then gives rise to an improved
\dqt which we thus call \fqt (Fast Q-adic Transform). This \fqt uses a
simultaneous reduction of several residues, called REDQ, and some
table lookup.

Therefore we recall in section~\ref{sec:galois} the previous
conversion algorithm and discuss in section \ref{sec:floor} about
a floating point implementation of modular reduction. This
implementation will be
used throughout the paper to get fast reductions. 
We then present our new simultaneous reduction in section
\ref{sec:redq} and show in section~\ref{sec:tmto} 
how a time-memory trade-off can make this
reduction very fast. This fast reduction is then
applied to modular polynomial
multiplication with small prime fields in section~\ref{sec:del}. It is
also applied to small extension field arithmetic and fast matrix
multiplication over those fields in section~\ref{sec:fflas}.

\section{Q-adic representation of \\ polynomials}\label{sec:galois}
We follow here the presentation of \cite{jgd:2002:fflas} of the idea
of \cite{Saunders:2001:qadic}: polynomial
arithmetic is performed a $q-$adic way, with $q$ a sufficiently big prime or
power of a single prime.

Suppose that $a = \sum_{i=0}^{k-1} \alpha_i X^i$ 
and $b = \sum_{i=0}^{k-1} \beta_i X^i$
are two polynomials in $\pF[p][X] $. One can perform
the polynomial multiplication $ a b $ via $q-$adic numbers. 
Indeed, by setting
$\tilde{a} = \sum_{i=0}^{k-1} \alpha_i q^i$ 
and $\tilde{b} = \sum_{i=0}^{k-1} \beta_i q^i$, the product is
computed in the following manner (we suppose that 
$\alpha_i = \beta_i = 0$ for $i > k-1$):
\begin{equation}\label{eq:mult}
\widetilde{a b} = \sum_{j=0}^{2k-2} \left( \sum_{i=0}^{j} \alpha_i
\beta_{j-i} \right) q^j
\end{equation}
Now if $q$ is large enough, the coefficient of $q^i$ will not exceed $q-1$.
In this case, it is possible to evaluate $a$ and $b$ as machine
numbers (e.g. floating point or machine integers), 
compute the product of these evaluations, and convert back to
polynomials by radix computations (see e.g. \cite[Algorithm
9.14]{VonzurGathen:1999:MCA}). There just remains then to perform
modulo $p$ reductions on every coefficient as shown on example
\ref{ex:dqt}.

\begin{example}\label{ex:dqt}
For instance, to multiply $a=X+1$ by $b=X+2$ in $\pF[3][X]$ one can use
the substitution $X=100$: compute $101 \times 102 = 10302$, use radix
conversion to write $10302=q^2+3q+2$ and reduce modulo $3$ to get $a
\times b = X^2+2$.
\end{example}
We call \dqt the evaluation of polynomials modulo $p$ at
$q$ and \dqt inverse the radix conversion of a $q$-adic development
followed by a modular reduction, as shown in algorithm~\ref{alg:dqt}.
\newcounter{horner}
\newcounter{radix}
\newcounter{modin}

\begin{algorithm}[ht]
\caption{Polynomial multiplication by \dqt}\label{alg:dqt}
\begin{algorithmic}[1]
\REQUIRE Two polynomials $v_1$ and $v_2$ in $\pF[p][X] $ of degree
less than $k$.
\REQUIRE a sufficiently large integer $q$.
\ENSURE $R \in \pF[p][X] $, with $R = v_1. v_2$.
\vspace{5pt}\newline{\underline{Polynomial to $q-$adic conversion}}\vspace{2pt}
\STATE Set $\widetilde{v_1}$ and $\widetilde{v_2}$ to the floating point vectors
of the evaluations at $q$ of the elements of $v_1$ and $v_2$.
\setcounter{horner}{\value{ALC@line}}
\COMMENT{Using e.g. Horner's formula}
\vspace{5pt}\newline{\underline{One computation}}\vspace{2pt}
\STATE Compute $\tilde{r} = \widetilde{v_1} \widetilde{v_2}$
\vspace{5pt}\newline{\underline{Building the solution}}\vspace{2pt}
\STATE $\tilde{r} = \sum_{i=0}^{2k-2} \widetilde{\mu_i} q^i$.
\setcounter{radix}{\value{ALC@line}}
\COMMENT{Using radix conversion, see e.g. \cite[Algorithm 9.14]{VonzurGathen:1999:MCA}}
\STATE For each $i$, set $\mu_i = \widetilde{\mu_i} \mod p$
\STATE set $R = \sum_{i=0}^{2k-2} \mu_i X^i $
\setcounter{modin}{\value{ALC@line}}
\end{algorithmic}
\end{algorithm}
Depending on the size of $q$, the results can still
remain exact and we obtain the following bounds generalizing that of
\cite[\S 8.4]{VonzurGathen:1999:MCA}:
\begin{theorem}{\cite{jgd:2002:fflas}} 
Let $m$ be the number of available mantissa bits 
within the machine numbers and $n_q$ be the number of
polynomial products $v_1.v_2$ of degree $k$
accumulated before the re-conversion.
If 
\begin{equation}\label{eq:bounds}
q > n_q k (p-1)^2 \text{~and~} (2k-1)\log_2(q) < m, 
\end{equation} then
Algorithm~\ref{alg:dqt} is correct.
\end{theorem}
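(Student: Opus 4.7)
The plan is to verify two independent invariants: (a) the machine product $\tilde r$ equals the exact integer product $\widetilde{v_1}\widetilde{v_2}$ (possibly summed over $n_q$ instances), and (b) once this integer is developed in base $q$, its digits are precisely the coefficients of the polynomial product reduced by the last step. I would split the argument along the two inequalities in (\ref{eq:bounds}) since each of them guards exactly one of these invariants.

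For (b), I would start from the identity (\ref{eq:mult}): after accumulating $n_q$ products, the coefficient of $q^j$ in $\widetilde{a b}$ (before radix normalization) is
\begin{equation*}
  c_j \;=\; \sum_{\ell=1}^{n_q}\sum_{i=0}^{j} \alpha_i^{(\ell)}\beta_{j-i}^{(\ell)}.
\end{equation*}
Each inner sum has at most $\min(j+1,2k-1-j)\le k$ terms, each bounded by $(p-1)^2$, so $c_j \le n_q k (p-1)^2$. The first inequality $q > n_q k (p-1)^2$ then guarantees $c_j < q$, which means the $c_j$ are exactly the base-$q$ digits of $\widetilde{ab}$. The radix step on line~3 thus returns $\widetilde{\mu_i}=c_i$, and the final reduction modulo $p$ recovers the correct coefficients of $ab \bmod p$ via the homomorphism $\pF[p][X]\to\pF[p]$ obtained by substituting $q$ (viewed mod $p$).

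For (a), I would bound the total integer $\tilde r$ by $\sum_{j=0}^{2k-2} c_j q^j \le (q-1)\sum_{j=0}^{2k-2}q^j = q^{2k-1}-1 < q^{2k-1}$, using the digit bound just established. Taking $\log_2$ gives $\log_2 \tilde r < (2k-1)\log_2 q$, so the second inequality $(2k-1)\log_2(q) < m$ implies $\tilde r$ is representable exactly in the mantissa. Since the inputs $\widetilde{v_1},\widetilde{v_2}$ are themselves bounded by $\tilde r$ and the accumulation only sums nonnegative integers bounded by $\tilde r$, every intermediate floating-point operation is exact, so $\tilde r$ on line~2 coincides with its mathematical value.

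The only genuinely delicate point is making sure the two bounds interact correctly: the digit bound used in (a) already requires (b). I would therefore present the arguments in the order above, emphasizing that once the $c_j<q$ bound is established from the first hypothesis, the size estimate for $\tilde r$ is immediate and the second hypothesis finishes the proof. The rest is routine arithmetic on the bounds and a citation of the standard radix-conversion algorithm from \cite[Algorithm 9.14]{VonzurGathen:1999:MCA}.
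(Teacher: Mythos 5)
Your proof is correct, but note that the paper itself offers no proof of this theorem to compare against: it is stated with a citation to \cite{jgd:2002:fflas} and described as a generalization of the bounds in \cite[\S 8.4]{VonzurGathen:1999:MCA}, and the proof is omitted. Your two-part argument --- (b) the digit bound $c_j \le n_q\,k\,(p-1)^2 < q$ forces the convolution coefficients to be the true base-$q$ digits, and (a) the resulting bound $\tilde r \le q^{2k-1}-1 < 2^m$ makes every intermediate floating-point value an exactly representable integer --- is precisely the standard Kronecker-substitution correctness argument, extended to $n_q$ accumulated products, and you correctly identify that (a) depends on the digit bound from (b). Two cosmetic remarks: the phrase about ``the homomorphism $\Z/p\Z[X]\to\Z/p\Z$ obtained by substituting $q$'' is misstated --- the relevant map is the coefficientwise reduction $\Z[X]\to\Z/p\Z[X]$, which is a ring homomorphism, so reducing the integer convolution coefficients $c_j$ modulo $p$ yields the coefficients of $ab$ over $\Z/p\Z$; and the claim that the inputs $\widetilde{v_1},\widetilde{v_2}$ are ``bounded by $\tilde r$'' fails trivially when one factor is zero, though each input is in any case below $q^k < 2^m$. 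Neither affects the validity of the argument.
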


Note that the integer $q$ can be chosen to be a power of 2. Then the
  Horner like evaluation (line \thehorner~of
  algorithm~\ref{alg:dqt}) of the polynomials at $q$ 
  is just a left shift. One can then compute
  this shift with exponent manipulations in floating point arithmetic
  and use native shift operator (e.g. the $<<$ operator in C) 
  as soon as values are within
  the $32$ (or $64$ when available) bit range. 

It is shown on \cite[Figures 5 \& 6]{jgd:2002:fflas}
that this wrapping is already a pretty good way to obtain high speed linear
algebra over some small extension fields. Indeed we were able to
reach  high peak performance, quite close to those obtained with
prime fields, namely 420 Millions of finite operations per second
(Mop/s) on a Pentium III, 735 MHz, and more than 500
Mop/s on a 64-bit DEC alpha 500 MHz. This was roughly 20 percent below
the pure floating point performance and 15 percent below the prime
field implementation. 

\section{Euclidean division by floating point routines}\label{sec:floor}
In the implementations of the proposed subsequent algorithms, we will
make extensive use of Euclidean division in exact arithmetic.
Unfortunately exact division is usually quite slow on modern
computers. 
This
division can thus be performed by floating point operations. Suppose
we want to compute $r/p$ where $r$ and $p$ and integers.
Then their difference is representable by a floating
point and, therefore, if $r/p$ is computed by a floating point
division with a rounding to nearest mode, \cite[Theorem
1]{Lefevre:2005:div} assures that
flooring the result gives the expected value. Now if a multiplication
by a precomputed inverse of $p$ is used (as is done e.g. in
NTL \cite{Shoup:2007:NTL}), proving the correctness for all $r$ is more difficult, see
\cite{Lefevre:2005:mul} for more details. 
We therefore propose the following simple lemma which enables the use
of the rounding upward mode to the cost of loosing only one bit of precision:
\begin{lemma}
For two positive integers $p$ and $r$ and $\epsilon>0$, we have
$$\left\lfloor \frac{r}{p} \right\rfloor = \left\lfloor 
\left( r\left(\frac{1}{p}(1+\epsilon)\right)\right)(1+\epsilon) \right\rfloor
~~\text{as long as}~~ r < \frac{1}{2\epsilon+\epsilon^2}.$$
\end{lemma}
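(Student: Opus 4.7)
The plan is to track the two rounding errors separately and then exploit the integrality of $r$ and $p$ to convert a weak upper bound into the claimed one.

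First I would unfold the definition of the inner expression. Under rounding upward, computing $1/p$ in floating point yields a value of the form $\tfrac{1}{p}(1+\epsilon_1)$ with $0\le\epsilon_1\le\epsilon$, and then multiplying by $r$ and rounding up again introduces another factor $(1+\epsilon_2)$ with $0\le\epsilon_2\le\epsilon$. Hence the quantity inside the outer floor is some real number $y$ satisfying
\[
\frac{r}{p}\;\le\; y \;\le\; \frac{r}{p}(1+\epsilon)^2 \;=\; \frac{r}{p}\bigl(1+2\epsilon+\epsilon^2\bigr).
\]
Since $y\ge r/p$, certainly $\lfloor y\rfloor\ge\lfloor r/p\rfloor$, so the only thing to rule out is that the upward rounding has pushed $y$ past the next integer.

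Next I would write $r = qp + s$ with $q=\lfloor r/p\rfloor$ and $0\le s\le p-1$, so that $r/p = q + s/p$. To ensure $\lfloor y\rfloor = q$, it suffices to show that $y < q+1$, i.e.\ that the excess $y - r/p$ is strictly less than $1 - s/p$. The critical observation is that because $p$ and $r$ are integers, the fractional part $s/p$ is a multiple of $1/p$, and in particular
\[
1 - \frac{s}{p} \;\ge\; \frac{1}{p}.
\]
This is where the integer hypothesis does the real work; without it, $1-s/p$ could be arbitrarily small and no bound on $r$ would suffice.

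Combining the two estimates, it is enough to enforce
\[
\frac{r}{p}\,(2\epsilon+\epsilon^2) \;<\; \frac{1}{p},
\]
which simplifies to $r(2\epsilon+\epsilon^2) < 1$, i.e.\ the stated condition $r < 1/(2\epsilon+\epsilon^2)$. The main (and really only) subtlety is being careful that with rounding upward the inequality $y \ge r/p$ holds, so that one need only bound $y$ from above; any attempt to use rounding-to-nearest would split into two cases and would not give a clean single-sided bound like this.
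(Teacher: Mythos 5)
Your proof is correct and follows essentially the same route as the paper's: decompose $r=qp+s$ with $0\le s\le p-1$, note that the worst case $s=p-1$ leaves a gap of exactly $1/p$ to the next integer, and require the accumulated relative error $\frac{r}{p}(2\epsilon+\epsilon^2)$ to stay strictly below $1/p$. The paper's version is terser (it leaves the lower bound $y\ge r/p$ implicit), but the decomposition and the final inequality are identical.
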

\begin{proof} Consider $up\leq r < up+i$ with $u, i$ positive integers and
  $i<p$.
Then $\left\lfloor \frac{r}{p} \right\rfloor =u$ and
$\frac{r}{p}(1+\epsilon)(1+\epsilon)=
u+\frac{i}{p}+\frac{r}{p}(2\epsilon+\epsilon^2)$.
The latter is maximal at $i=p-1$.
This proves that flooring is correct as long as 
$\frac{r}{p}(2\epsilon+\epsilon^2)<\frac{1}{p}$.
\end{proof}
This proves that when rounding towards $+\infty$
it is possible to perform the division by a multiplication
by the precomputed inverse of the prime number as long as 
$r$ is not too large.
Since our entries will be integers but stored 
in floating point format this is a potential significant
speed-up. 

\section{REDQ: modular reduction in the \dqt domain}\label{sec:redq}
The first improvement we propose to the \dqt is to replace the costly
modular reduction of the polynomial coefficients by a {\em single} division
by $p$ (or, better, by a multiplication by its inverse) followed by several
shifts. 
In order to prove the correctness of this algorithm, we first need the
following lemma:
\begin{lemma}\label{lem:flfl} For $r  \in \N$ and $a$, $b \in \N^*$, 
$$\left\lfloor \frac{\left\lfloor
      \frac{r}{b}\right\rfloor}{a}\right\rfloor =
\left\lfloor \frac{r}{ab}\right\rfloor =
\left\lfloor \frac{\left\lfloor \frac{r}{a}\right\rfloor}{b}\right\rfloor$$
\end{lemma}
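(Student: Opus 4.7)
The plan is to prove the middle equality $\lfloor \lfloor r/b \rfloor / a \rfloor = \lfloor r/(ab) \rfloor$ first; the other equality then follows immediately by swapping the roles of $a$ and $b$, since the middle expression $\lfloor r/(ab) \rfloor$ is symmetric in $a$ and $b$.

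To prove the middle equality, I would use two successive Euclidean divisions. First write $r = b \lfloor r/b \rfloor + r_1$ with $0 \le r_1 < b$. Then perform a second Euclidean division of the quotient $\lfloor r/b \rfloor$ by $a$: write $\lfloor r/b \rfloor = a q + r_2$ with $0 \le r_2 < a$, so that by definition $\lfloor \lfloor r/b \rfloor / a \rfloor = q$. Substituting the second equation into the first gives
$$r = b(aq + r_2) + r_1 = ab\,q + (b\,r_2 + r_1).$$

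The key step is to check that the remainder $b\,r_2 + r_1$ is strictly less than $ab$, because then the uniqueness of Euclidean division forces $\lfloor r/(ab) \rfloor = q$, matching the left-hand side. This bound is straightforward: $r_2 \le a-1$ and $r_1 \le b-1$ give
$$b\,r_2 + r_1 \le b(a-1) + (b-1) = ab - 1 < ab,$$
and of course $b\,r_2 + r_1 \ge 0$. So $q$ is indeed the quotient of $r$ by $ab$, establishing the middle equality.

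I don't anticipate a real obstacle here; the only thing to be slightly careful about is that the claim requires $a, b \in \mathbb{N}^*$ (nonzero), which is exactly what makes the Euclidean divisions well-defined. Once the middle equality is established, the symmetric equality $\lfloor r/(ab) \rfloor = \lfloor \lfloor r/a \rfloor / b \rfloor$ is obtained by interchanging $a$ and $b$ throughout the argument.
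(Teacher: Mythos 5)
Your proof is correct. It takes a recognizably different route from the paper's: the paper fixes $k=\left\lfloor \frac{r}{ab}\right\rfloor$, so that $kab\leq r<(k+1)ab$, divides through by $a$, and uses the fact that the endpoint $kb$ is an integer to conclude that $\left\lfloor \frac{r}{a}\right\rfloor$ still lies in $[kb,(k+1)b)$, whence dividing by $b$ and flooring again returns $k$ --- an ``interval squeeze'' argument working top-down from the target value. You instead work bottom-up, composing two explicit Euclidean divisions $r=b\left\lfloor \frac{r}{b}\right\rfloor+r_1$ and $\left\lfloor \frac{r}{b}\right\rfloor=aq+r_2$, and verifying the combined remainder bound $br_2+r_1\leq ab-1$ so that uniqueness of the division by $ab$ identifies $q$. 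The two mechanisms are dual ways of expressing the same underlying fact; yours makes the remainder arithmetic explicit (which is arguably more self-contained, since it does not implicitly appeal to the property that flooring cannot cross an integer threshold), while the paper's is slightly shorter and handles both nested expressions in one stroke by symmetry of the interval. Both you and the paper dispatch the remaining equality by interchanging $a$ and $b$, which is legitimate since the middle term is symmetric. No gaps.
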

\begin{proof} 
We proceed by splitting the possible values of $r$ into intervals
$kab \leq r <  (k+1)ab$, where 
$k=\left\lfloor \frac{r}{ab}\right\rfloor$. 
Then $kb \leq \frac{r}{a} < (k+1)b$ and since
$kb$ is an integer we also have that $kb \leq \left\lfloor
  \frac{r}{a}\right\rfloor < (k+1)b$. Thus $k \leq  \frac{\left\lfloor
    \frac{r}{a}\right\rfloor}{b} < k+1$ and $\left\lfloor
  \frac{\left\lfloor \frac{r}{a}\right\rfloor}{b}\right\rfloor
=k$. Obviously the same is true for the left hand side which proves
the lemma.
\end{proof}

This idea is used in algorithm~\ref{alg:REDQ} to perform several
remainderings with a single machine division (note that
when $q$ is a power of $2$, and when elements are represented using an
integral type, division by $q^i$ and flooring are a single
operation, a right shift).

\begin{algorithm}[ht]
\caption{REDQ}\label{alg:REDQ}
\begin{algorithmic}[1]
\REQUIRE Two integers $p$ and $q$ satisfying the conditions (\ref{eq:bounds}).
\REQUIRE $\tilde{r}  = \sum_{i=0}^d \widetilde{\mu_i} q^i \in \Z$.
\ENSURE $\rho \in \Z$, with $\rho = \sum_{i=0}^d \mu_i q^i$
where $\mu_i = \widetilde{\mu_i} \mod p$.
\STATE\label{line:rop} $rop = \left\lfloor \frac{\tilde{r}}{p} \right\rfloor$;
\FOR{$i=0$ to $d$}
\STATE\label{line:ui} $u_i = \left\lfloor \frac{\tilde{r}}{q^i} \right\rfloor -  p \left\lfloor \frac{rop}{q^i} \right\rfloor$;
\ENDFOR
\STATE\label{line:qdbeg} $\mu_{d}=u_{d}$ 
\FOR{$i=0$ to $d-1$}
\STATE $\mu_i = u_i-qu_{i+1} \mod p$;
\ENDFOR\label{line:qdend}
\STATE Return $\rho = \sum_{i=0}^d \mu_i q^i$;
\end{algorithmic}
\end{algorithm}

\begin{theorem}\label{thm:REDQ}
Algorithm REDQ is correct.
\end{theorem}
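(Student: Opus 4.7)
The plan is to show that after the second loop, $\mu_i = \widetilde{\mu_i} \bmod p$ for every $i$, which immediately gives $\rho=\sum_i \mu_i q^i$ as required. The whole argument rests on identifying the quantity $u_i$ computed on line~\ref{line:ui} as the residue of $\lfloor \tilde{r}/q^i\rfloor$ modulo $p$, after which a short telescoping step isolates $\widetilde{\mu_i}\bmod p$.

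\medskip

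\noindent\textbf{Step 1: A closed form for $\lfloor \tilde{r}/q^i\rfloor$.}
By hypothesis $\tilde r=\sum_{j=0}^d \widetilde{\mu_j}q^j$, and the bound (\ref{eq:bounds}) on $q$ guarantees $0\le \widetilde{\mu_j}<q$, so the $\widetilde{\mu_j}$ are genuine $q$-adic digits. First I would observe that under this hypothesis
\[
\left\lfloor \frac{\tilde r}{q^i}\right\rfloor
= \sum_{j=i}^d \widetilde{\mu_j}\,q^{j-i},
\]
so that in particular $\lfloor\tilde r/q^d\rfloor=\widetilde{\mu_d}$.

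\medskip

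\noindent\textbf{Step 2: $u_i=\lfloor \tilde r/q^i\rfloor \bmod p$.}
From line~\ref{line:rop}, $rop=\lfloor\tilde r/p\rfloor$. Applying Lemma~\ref{lem:flfl} with $a=p$ and $b=q^i$,
\[
\left\lfloor \frac{rop}{q^i}\right\rfloor
=\left\lfloor \frac{\tilde r}{p\,q^i}\right\rfloor
=\left\lfloor\frac{\lfloor\tilde r/q^i\rfloor}{p}\right\rfloor.
\]
Substituting this into line~\ref{line:ui} yields
\[
u_i=\left\lfloor\frac{\tilde r}{q^i}\right\rfloor - p\left\lfloor\frac{\lfloor\tilde r/q^i\rfloor}{p}\right\rfloor
=\left\lfloor\frac{\tilde r}{q^i}\right\rfloor \bmod p,
\]
and combined with Step~1,
\[
u_i \equiv \sum_{j=i}^d \widetilde{\mu_j}\,q^{j-i} \pmod{p}.
\]

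\medskip

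\noindent\textbf{Step 3: Telescoping.}
Splitting off the leading term gives $u_i \equiv \widetilde{\mu_i} + q\sum_{j=i+1}^d \widetilde{\mu_j}q^{j-i-1}\pmod p$, while $qu_{i+1}\equiv q\sum_{j=i+1}^d \widetilde{\mu_j}q^{j-i-1}\pmod p$. Subtracting,
\[
u_i - q\,u_{i+1} \equiv \widetilde{\mu_i} \pmod{p},
\]
so line 7 of the inner loop sets $\mu_i=\widetilde{\mu_i}\bmod p$ for $0\le i\le d-1$. For the top digit, Step~1 gives $\lfloor\tilde r/q^d\rfloor=\widetilde{\mu_d}<q$, and Step~2 gives $u_d=\widetilde{\mu_d}\bmod p$, which is exactly what line~\ref{line:qdbeg} assigns to $\mu_d$. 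Putting these together, $\rho=\sum_{i=0}^d(\widetilde{\mu_i}\bmod p)\,q^i$ as claimed.

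\medskip

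\noindent\textbf{Expected obstacle.}
The algebra is essentially forced once Step~2 is recognized; the only subtle point is invoking Lemma~\ref{lem:flfl} correctly to swap $p$ and $q^i$ inside the nested floors, and confirming that the $q$-adic digit bound $\widetilde{\mu_j}<q$ (guaranteed by (\ref{eq:bounds})) really holds in the intended use, since otherwise the closed form in Step~1 would fail and the telescoping would pick up unwanted carries.
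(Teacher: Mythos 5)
Your proof is correct and follows essentially the same route as the paper's: apply Lemma~\ref{lem:flfl} to rewrite $\left\lfloor rop/q^i\right\rfloor$ as $\left\lfloor\left\lfloor\tilde r/q^i\right\rfloor/p\right\rfloor$, conclude $u_i=\left\lfloor\tilde r/q^i\right\rfloor\bmod p\equiv\sum_{j=i}^d\widetilde{\mu_j}q^{j-i}$, and let the telescoping finish the job. The only difference is that the paper additionally proves $0\le u_i<p$ by a separate direct bound argument (ruling out $u_i=p$ by hand), which your identification of $u_i$ as a genuine residue modulo $p$ obtains for free, and you make explicit the digit-bound hypothesis $\widetilde{\mu_j}<q$ that the paper uses implicitly.
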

\begin{proof}
First we need to prove that $0 \leq u_i < p$.
By definition of the truncation, we have $\frac{\tilde{r}}{q^i}-1 < \left\lfloor \frac{\tilde{r}}{q^i}
\right\rfloor \leq \frac{\tilde{r}}{q^i}$ and $\frac{\tilde{r}}{pq^i} - 1 - \frac{1}{q^i}< \left\lfloor \frac{rop}{q^i}
\right\rfloor \leq \frac{\tilde{r}}{pq^i}$.
Thus $-1 < u_i < p +\frac{p}{q^i}$, which is $0 \leq u_i \leq p$ since
$u_i$ is an integer. We now consider the possible case $u_i = p$ and
show that it does not happen. $u_i = p$ means that $\left\lfloor \frac{\tilde{r}}{q^i}
\right\rfloor = p(1+\left\lfloor \frac{rop}{q^i} \right\rfloor) =
pg$. This means that $pgq^i\leq r < pgq^i + q^i$.
So that in turns $gq^i\leq rop \leq \frac{\tilde{r}}{p} <
gq^i+\frac{q^i}{p}$. Thus $g\leq \frac{rop}{q^i}
< g +\frac{1}{p}$ so that $\left\lfloor \frac{rop}{q^i} \right\rfloor
=g$. But then from the definition of $g$ we have that $g = g-1$ which
is absurd. Therefore $0 \leq u_i \leq p-1$.

Second we show that $u_i = \sum_{j=i}^{d} \mu_j q^{j-i} \mod p$. Line
\ref{line:ui} of algorithm \ref{alg:REDQ} defines
$u_i = \left\lfloor \frac{\tilde{r}}{q^i} \right\rfloor -  p \left\lfloor
  \frac{\left\lfloor \frac{\tilde{r}}{p} \right\rfloor}{q^i} \right\rfloor$ and thus lemma~\ref{lem:flfl} gives
that $u_i = \left\lfloor \frac{\tilde{r}}{q^i} \right\rfloor -  p \left\lfloor
  \frac{ \left\lfloor \frac{\tilde{r}}{q^i} \right\rfloor}{p}
\right\rfloor$. The latter is $u_i = \left\lfloor
  \frac{\tilde{r}}{q^i} \right\rfloor \mod p$. Now, since $\tilde{r} =
\sum_{j=0}^{d} \widetilde{\mu_j} q^j$, we have that $\left\lfloor
  \frac{\tilde{r}}{q^i} \right\rfloor = \sum_{j=i}^{d}
\widetilde{\mu_j} q^{j-i}$. Therefore, as $\mu_j = \widetilde{\mu_j}
\mod p$, the equality is proven.
\end{proof}

Note that the last steps are not needed when $p$ divides $q$. Indeed
in this case $q\equiv 0 \mod p$. The trick 
works then simply as shown on example \ref{ex:pdviq} below:
\begin{example}\label{ex:pdviq}
Let $a=X^2+2X+3$ and $b=4X^2+5X+6$ unreduced modulo $5$.
Then $\tilde{a}\times\tilde{b}=40013002800270018$, with $q=10000$,
for which
we need to reduce five coefficients modulo $5$. The trick
is that we can recover all the residues at once. 
Line \ref{line:rop}
produces $rop=\lfloor 08002600560054003.6 \rfloor$. It thus contains
all the quotients $0$;$0002$;$0005$;$0005$;$0003$ and one has then
just to multiply by $p$ and subtract to get:
{\small $\widetilde{a \times b} = 40013002800270018 - 2000500050003 \times
5 = 40003000300020003$} so that $a\times b = 4X^4+3X^3+3X^2+2X+3$.
\end{example}

Now we can give a full example to show the last corrections 
required when $p$ does not
divide $q$. The first part of the algorithm, lines \ref{line:rop} to
\ref{line:ui} is unchanged and is used to get small sizes for $\mu_i$.
The second part is then just a small correction modulo $p$ to get
the correct result.
\begin{example}\label{ex:full}
Take the polynomial $R=1234X^3+5678X^2+9123X+4567$, the
prime $p=23$ and use $q=10^6$. 
In this case, the division gives $rop = \lfloor 1234005678009123004567/23
\rfloor$  $= 53652420783005348024$.
Then the multiplication by the prime produces 
$rop \times 23 =  1234005678009123004552$ so that
$u_0 = 4567-4552=15$. We shift to get $1234005678009123$ and
$53652420783005 \times 23  = 1234005678009115$ which gives $u_1=9123-9115=8$. We
shift and multiply twice to get $u_2=18$ and $u_3= \mu_3 = 15$ just
like in example \ref{ex:pdviq}.
Now $-q = -10^6 \mod 23 = 17$ which is non zero and thus
we have to
compute the corrections of lines \ref{line:qdbeg} to \ref{line:qdend}
of algorithm \ref{alg:REDQ}. This can also be formalized as a matrix
vector product: 
\[ \mu = \left[ \begin{array}{cccc} 1 & 17 & 0 &
    0\\0&1 &17&0\\0&0&1&17\\0&0&0&1\end{array}\right] u \mod p
\]
to get the final result, $R= 15 X^3+20 X^2+15 X+13 \mod 23$.
\end{example}

The algorithm is efficient because one can precompute $1/p$, $1/q$,
$1/q^2$ etc. and use multiplication to compute all of the mods. The
computation of each $u_i$ and $\mu_i$ can also be pipelined or
vectorized since they are independent.
As is, the benefit when
compared to direct remaindering by $p$ is that the corrections occur
on smaller integers. Thus the remaindering by $p$ can be faster.
Actually, another major acceleration can be added:
the fact that the $\mu_i$ are much smaller than the initial
$\tilde{\mu_i}$ makes it possible to tabulate the corrections as shown
next. 

\section{Time-Memory trade-off in REDQ}\label{sec:tmto}
\subsection{A Matrix version of the correction}
Indeed, there is a bijection between the $u_i$ and the $\mu_i$.
This can be viewed on the corrections of lines \ref{line:qdbeg} to
\ref{line:qdend} of algorithm \ref{alg:REDQ}: view these corrections
as a matrix-vector multiplication by a matrix $Q_d$ as in example
\ref{ex:full}. Then we have that:
{\small
\[ 
Q_d = \left[ \begin{array}{ccccc} 
1 	& -q 	& 0	& \ldots 	&0\\
0	& \ddots 	& \ddots	& \ddots 	&\vdots \\
\vdots 	& \ddots	& \ddots 	& \ddots 	&0\\
\vdots 	& 	& \ddots	& \ddots	&-q\\
0 	& \ldots	& \ldots	& 0	&1
\end{array}\right] = 
\left[ \begin{array}{ccccc} 
1 	& q 	& q^2	& \ldots 	&q^{d}\\
0	& \ddots 	& \ddots	& \ddots 	&\vdots \\
\vdots 	& \ddots	& \ddots 	& \ddots 	&q^2\\
\vdots 	& 	& \ddots	& \ddots	&q\\
0	& \ldots	& \ldots	& 0	&1
\end{array}\right] ^{-1}
\]}

\subsection{Tabulations of the matrix-vector product and Time-Memory trade-off}
Thus if the multiplication by $Q_d$ 
is fully tabulated, it requires a table of size at
least $p^{d+1}$. But, due to the nature of $Q_d$, we have the
relations of figure~\ref{fig:qdmat}.
\begin{figure}[ht]\hfill
\includegraphics[height=70pt]{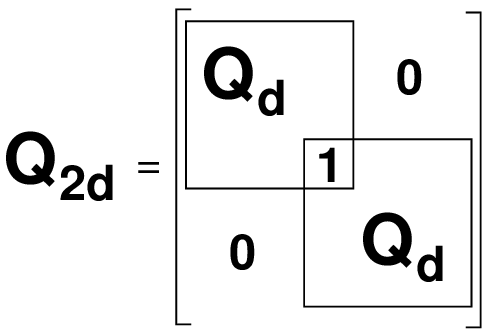}\hfill
\includegraphics[height=70pt]{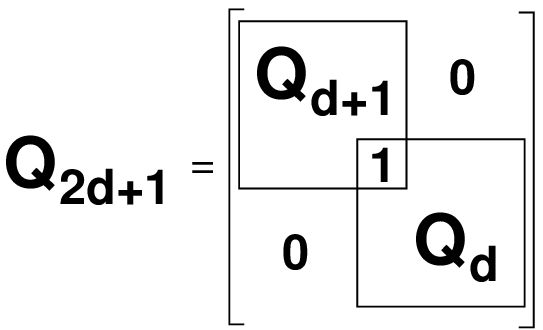}\hfill\ 
\caption{Recurring relations on the $Q_d$ matrices.}\label{fig:qdmat}
\end{figure}

Therefore, it is very easy to tabulate with a table of size $p^k$ only
and perform $\left\lceil 1+\frac{d+1-k}{k-1}\right\rceil =
\left\lceil\frac{d}{k-1}\right\rceil$ 
table accesses
as shown on example \ref{ex:tmto}.
\begin{example}\label{ex:tmto}
Let us compute the corrections for a degree $6$ polynomial. One can
tabulate the multiplication by $Q_6$, a $7\times 7$ matrix, with
therefore $p^7$ entries each of size at least $7\log_2(p)$. Or one can
tabulate the multiplication by $Q_2$, a $3\times 3$ matrix. To compute
$[\mu_0,\ldots,\mu_6]^T = Q_6 [u_0\ldots,u_6]^T$ one can instead use
three multiplications by $Q_2$ and discard the last entry for the
first two multiplications as shown on the
following algorithm:
\begin{algorithm}[H]
\caption{$Q_6$ with an extra memory of size $p^3$}
\begin{algorithmic}[1]
\REQUIRE $[u_0\ldots,u_6] \in \pF[p]^7$.
\REQUIRE The table $Q_2$ of the associated $3\times3$ matrix-vector
multiplication over $\pF[p]$.
\ENSURE $[\mu_0,\ldots,\mu_6]^T = Q_6 [u_0\ldots,u_6]^T$.
\STATE $a_0,a_1,a_2 = Q_2 [ u_0,u_1,u_2]$;
\STATE $b_0,b_1,b_2 = Q_2[u_2,u_3,u_4]$;
\STATE $c_0,c_1,c_2 = Q_2[u_4,u_5,u_6]$;
\STATE Return $[\mu_0,\ldots,\mu_6]^T= [a_0,a_1,b_0,b_1,c_0,c_1,c_2]^T$;
\end{algorithmic}
\end{algorithm}
\end{example}

When $q$ is a power of $2$, 
the computation of the $u_i$, in the first part of algorithm
\ref{alg:REDQ} requires $1~\text{div}$ \& $(d+1)~\text{mul}$ \&
$2d~\text{shifts}$. Now, the time memory trade-off enables to compute
the second part at a choice of costs given on
table~\ref{tab:compl}.
\begin{table}[ht]\center
\begin{tabular}{|l|l|}
\hline
Extra Memory    & time\\
\hline
$0$ & $d$ (mul,add,mod)\\
$p^2$ & $d$ accesses\\
$p^k$ & $\left\lceil \frac{d}{k-1}  \right\rceil$ accesses\\
$p^{d+1}$ & 1 access\\
\hline\end{tabular}
\caption{Time-Memory trade-off in REDQ of degree $d$ over $\pF[p]$}\label{tab:compl}
\end{table}

\subsection{Indexing}
In practice, indexing by a t-uple of integers mod $p$ 
is made by evaluating at $p$, as $\sum u_i p^i$. If more memory
is available, one can also directly index in the binary format
using $\sum u_i \left(2^{\lceil \log_2(p) \rceil}\right)^i$. On the
one hand all
the multiplications by $p$ are replaced by binary shifts. On the other
hand, this makes the table grow a little bit, from $p^k$ to
$2^{\lceil \log_2(p) \rceil k}$. 

\section{Comparison with delayed \\ reduction for polynomial\\ multiplication}\label{sec:del}
The classical alternative to algorithm~\ref{alg:dqt} to perform modular polynomial multiplication is to
use delayed reductions e.g. as in \cite{jgd:2004:dotprod}:
the idea is to accumulate products of the form $\sum_i a_i b_{k-i}$, 
without reductions, while the sum does not overflow. Thus, if we use
for instance a centered representation modulo $p$ (integers from $\frac{1-p}{2}$ to
$\frac{p-1}{2}$), it is possible to accumulate at least $n_d$ products as long as
\begin{equation}\label{eq:delayed}
n_d (p-1)^2 < 2^{m+1}
\end{equation}
The modular reduction can be made by many different ways
(e.g. classical division, floating point multiplication by the
inverse, Montgomery reduction, etc.), we just call the best one REDC
here. It is at most equivalent to 1 machine division.

Now the idea of the \fqt (Fast Q-adic Transform) 
is to represent modular polynomials 
of the form $P = \sum_{i=0}^N a_i X^i$ by 
$P = \sum P_i \left( X^{d+1} \right) ^{i}$ where the $P_i$ are degree $d$
  polynomials stored in a single integer in the $q$-adic way.

Therefore, a product $PQ$ has the form 
$$\sum \left(\sum P_i Q_{t-i}\right) \left(X^{d+1}\right)^{t}.$$
There, each multiplication $P_i Q_{t-i}$ is made by algorithm
\ref{alg:dqt} on a single machine integer.
The reduction is made by a tabulated REDQ and 
can also be delayed now as long as conditions
(\ref{eq:bounds}) are guaranteed.

We want to compare these two strategies. We thus propose the following
complexity model: 
we count only multiplications and additions in the field as an
atomic operation and separate the machine divisions. We for instance
approximate REDC by a machine division. We call REDQ$_k$ a
simultaneous reduction of $k$ residues. In our complexity model a
REDQ$_k$ thus requires  $1$ division and $2k$ multiplications and
additions. 
We also call $d-$FQT the use of a degree
$d$ $q$-adic substitution. Thus a multiplication $P_i Q_{t-i}$ in a
$d-$FQT requires the reduction of $2d+1$ coefficients, i.e. a
REDQ$_{2d+1}$.

Let $P$ be a polynomial of degree $N$ with indeterminate "X". 
If we use a $d-$FQT, it will then become a polynomial of degree $D_q$
in the indeterminate $Y=X^{d+1}$. 
Thus, $$D_q = \left\lceil \frac{N+1}{d+1} \right\rceil -1.$$
Table~\ref{tab:mulpol} gives the respective
complexities of both strategies. 

\begin{table}[ht]\center
\begin{tabular}{|l||c|c|}
\hline
& Mul \& Add & Reductions \\
\hline
Delayed & $(2N+1)^2$ & $(2N+1)\left\lceil \frac{2N+1}{n_d} \right\rceil$ REDC\\
d-\fqt    & $(2D_q+1)^2$ & $(2D_q+1)\left\lceil\frac{2D_q+1}{n_q}\right\rceil$ REDQ$_{2d+1}$ \\
\hline
\end{tabular}
\caption{Modular polynomial multiplication complexities.}\label{tab:mulpol}
\end{table}

For instance, with $p=3$, $N=500$, if we choose a double floating point
representation and a degree $4$ \dqt, the fully
tabulated \fqt boils down
to $10^5$ multiplications and additions and $4.10^3$ divisions. 
For the same parameters,
the classical
polynomial multiplication algorithm requires 
$10^6$ multiplications and additions and only 
$10^3$ remaindering. This is roughly $9$
times more operations as shown on figure~\ref{fig:dqt}. 
\begin{figure}[ht]
\hspace{-18pt}\includegraphics[width=\columnwidth*11/10]{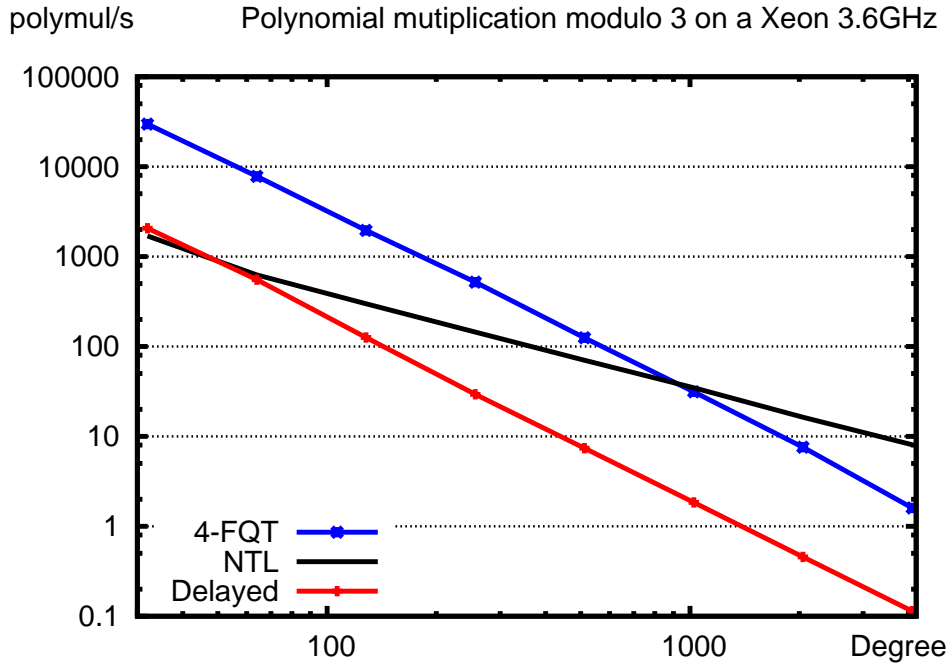}
\caption{Polynomial multiplications modulo 3 per second on a Xeon 
3.6 GHz}\label{fig:dqt}
\end{figure}

Even by switching to a larger mantissa, say e.g. 128 bits, so that the
\dqt multiplications are roughly 4 times costlier than double floating
point operations, this can still be useful:
take $p=1009$ and choose $d=3$, still gives around 
$10^5$ multiplications and additions over 128
bits and $4.10^3$ divisions. This makes $8$ times less operations. 
This should therefore still be faster than the
delayed over $32$ bits.

On figure \ref{fig:dqt},
we compare also our two implementations with that of NTL
\cite{Shoup:2007:NTL}. We see that
the \fqt is faster than NTL as long as better algorithms are not
used. Indeed the change of slope in NTL's curve reflects the use of
Karatsuba's algorithm for polynomial multiplication.
One should note that NTL also proposes a very optimized modulo 2
implementation which is an order of magnitude faster than our
implementation on small primes. There is therefore room for more
improvements on small fields.
Our strategy is anyway very useful for small degrees and small primes. 
Furthermore, we have not implemented the \fqt as the base case 
of faster recursive algorithms such as Karatsuba, Toom-Cook, etc.
The figure shows that these recursive algorithms together with the
\fqt could be the fastest.

In particular, the \fqt already
improves the speed of small finite field extension's arithmetic as
shown next.

\section{Application to small finite\\ field
  extensions}\label{sec:fflas}
The isomorphism between finite fields of equal sizes gives us a canonical
representation: 
any finite
field extension is viewed as the set of polynomials modulo a prime $p$
and modulo an
irreducible polynomial ${\cal P}$ of degree $k$. 
Clearly we can thus convert any finite field element to its $q$-adic
expansion ; perform the \fqt between two elements and then reduce the
obtained polynomial modulo ${\cal P}$.
Furthermore, it is possible to use floating point routines to perform
exact linear algebra as demonstrated in \cite{jgd:2008:toms}. 

Our
strategy here, see algorithm \ref{alg:FGDP}, is thus to convert
vectors over $\GF[p^k]$ to $q$-adic floating point, to call a fast
numerical linear algebra routine (BLAS) and then to convert the
floating point result back to the usual field representation.
In this paper we propose to improve all the conversion steps 
of \cite[algorithm 4.1]{jgd:2002:fflas}
 and thus approach the performance of the prime
field wrapping also for small extension fields:
\begin{enumerate}
\item Replace the Horner evaluation of the polynomials, to form the
  $q$-adic expansion, by a single table lookup, recovering directly the
  floating point representation.
\item Replace the radix conversion and the costly modular reductions
  of each polynomial coefficient, by a single REDQ operation.
\item Replace the polynomial division by two table lookups and
a single field operation.
\end{enumerate}
Indeed, 
suppose the internal representation of the extension field is already by
discrete logarithms and uses conversion tables from polynomial to index
representations. See e.g. \cite{jgd:2004:dotprod} for more details.
Then we choose a time-memory trade-off for the REDQ operation of the
same order of magnitude, that is to say $p^k$.
The overall memory required by these new tables only
doubles and the REDQ requires only $2$ accesses. 
Moreover, in the small extension, the polynomial multiplication must
also be reduced by an irreducible polynomial, ${\cal P}$. 
We show next that this reduction can be precomputed
in the REDQ table lookup and is therefore almost free.

Moreover, many things can be factorized if the field representation is
by discrete logarithms. Indeed, the element are represented by their discrete
logarithm with respect to a generator of the field, instead of by
polynomials. In this case
there are already some table accesses
for many arithmetic operations, see
e.g. \cite[\S 2.4]{jgd:2004:dotprod} for more details.

More precisely, we here propose algorithm~\ref{alg:FGDP} for linear
algebra over extension fields: line 1 is
the table look-up of floating point values associated to elements of
the field ; line 2 is the numerical computation ; line 3 to 7 is the
first part of the REDQ reduction ; line 8 and 9 are a time-memory
trade-off with two table accesses for the corrections of REDQ, combined
with a conversion from polynomials to discrete logarithm
representation ; the last line 10 combines the latter two results, inside
the field.
\begin{algorithm}[ht]
\caption{Fast Dot product over Galois fields via \fqt and \fqt inverse}
\label{alg:FGDP}
\begin{algorithmic}[1]
\REQUIRE a field $\GF[p^k]$ with elements represented as exponents of
a generator of the field.
\REQUIRE Two vectors $v_1$ and $v_2$ of elements of $\GF[p^k]$.
\REQUIRE a sufficiently large integer $q$.
\ENSURE $R \in \GF[p^k]$, with $R = v_1^T. v_2$.
\vspace{5pt}\newline{\underline{Tabulated $q-$adic conversion}}\vspace{2pt}
\newline\COMMENT{Use conversion tables from exponent to floating point evaluation}
\STATE Set $\widetilde{v_1}$ and $\widetilde{v_2}$ to the floating point vectors
of the evaluations at $q$ of the elements of $v_1$ and $v_2$.
\vspace{5pt}\newline{\underline{The floating point computation}}\vspace{2pt}
\STATE Compute $\tilde{r} = \widetilde{v_1}^T \widetilde{v_2}$;
\vspace{5pt}\newline{\underline{Computing a radix decomposition}}\vspace{2pt}
\STATE $r = \lfloor \tilde{r} \rfloor$;$~~~~$\COMMENT{$r=\tilde{r}$
  but we might need a conversion to an integral type}
\STATE $rop = \left\lfloor \frac{\tilde{r}}{p} \right\rfloor$;
\FOR{$i=0$ to $2k-2$}
\STATE $u_i = \left\lfloor \frac{r}{q^i} \right\rfloor -  p \left\lfloor \frac{rop}{q^i} \right\rfloor$;
\ENDFOR
\vspace{5pt}\newline{\underline{Tabulated radix conversion to exponents of the generator}}\vspace{2pt}
\newline\COMMENT{$\mu_i$ is such that $\mu_i = \widetilde{\mu_i} \mod p$ for
$\tilde{r} = \sum_{i=0}^{2k-2} \widetilde{\mu_i} q^i$} 
\STATE Set $L = representation( \sum_{i=0}^{k-2} \mu_i X^i )$.
\STATE Set $H = representation( X^{k-1} \times \sum_{i=k-1}^{2k-2} \mu_i X^{i-k+1} )$.
\vspace{5pt}\newline{\underline{Reduction in the field}}\vspace{2pt}
\STATE Return $R = H + L \in \GF[p^k]$;
\end{algorithmic}
\end{algorithm}

A variant of REDQ is used in algorithm~\ref{alg:FGDP},
but $u_i$ still satisfies $u_i =
\sum_{j=i}^{2k-2} \mu_j q^{j-i} \mod p$ as shown in
theorem~\ref{thm:REDQ}. Therefore the representations of 
$\sum \mu_i X^j$ in the field can be precomputed and stored in two tables
where the indexing will be made by $(u_0,\ldots,u_{k-1})$ and
$(u_{k-1},\ldots,u_{2k-2})$ and not by the $\mu_i$'s as shown next.
%
\begin{theorem}\label{thm:fgdp}
Algorithm~\ref{alg:FGDP} is correct.
\end{theorem}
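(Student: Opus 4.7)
The plan is to verify the algorithm by inspecting three successive rewritings of the result. The first step is to show that the floating-point dot product $\tilde r$ at line~2 equals, as an exact integer, $\sum_{j=0}^{2k-2}\tilde\mu_j q^j$, where the $\tilde\mu_j$ are the unreduced coefficients of $v_1^T v_2 \in \Z[X]$. This is immediate from conditions~(\ref{eq:bounds}) and the \dqt analysis of Section~\ref{sec:galois}, with the accumulation count $n_q$ playing the role of the common length of $v_1$ and $v_2$. Letting $P(X)=\sum_{j=0}^{2k-2}\mu_j X^j$ denote the coefficient-wise reduction of this polynomial modulo $p$, we have $\tilde\mu_j\equiv\mu_j\pmod p$, and the element of $\GF[p^k]$ we want to return is $P(X)\bmod\mathcal{P}$.

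Second, lines~3--7 of Algorithm~\ref{alg:FGDP} reproduce lines~\ref{line:rop}--\ref{line:ui} of Algorithm~\ref{alg:REDQ}, so the proof of Theorem~\ref{thm:REDQ} (via Lemma~\ref{lem:flfl}) applies verbatim and yields $u_i=\lfloor\tilde r/q^i\rfloor\bmod p=\sum_{j=i}^{2k-2}\mu_j q^{j-i}\bmod p$ with $0\le u_i<p$. The telescoping identity $\mu_i\equiv u_i-q\,u_{i+1}\pmod p$ (with the convention $u_{2k-1}:=0$) then recovers each $\mu_i$ from two consecutive $u$'s. In particular, the sub-tuple $(u_0,\ldots,u_{k-1})$ functionally determines $(\mu_0,\ldots,\mu_{k-2})$, and $(u_{k-1},\ldots,u_{2k-2})$ functionally determines $(\mu_{k-1},\ldots,\mu_{2k-2})$, so the two lookup tables are well-defined as functions of their indices.

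Third, by construction the $L$-table entry at $(u_0,\ldots,u_{k-1})$ stores the $\GF[p^k]$-representation of $L_{\mathrm{poly}}:=\sum_{i=0}^{k-2}\mu_i X^i$, and the $H$-table entry at $(u_{k-1},\ldots,u_{2k-2})$ stores the representation of $H_{\mathrm{poly}}:=X^{k-1}\sum_{i=k-1}^{2k-2}\mu_i X^{i-k+1}$ already reduced modulo the defining irreducible polynomial $\mathcal{P}$ -- folding that reduction into the precomputation is the key saving. Since $P(X)=L_{\mathrm{poly}}+H_{\mathrm{poly}}$ as polynomials, line~10 returns $H+L=[P(X)\bmod\mathcal{P}]=v_1^T v_2 \in \GF[p^k]$, which is what was claimed.

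The only delicate point, more a bookkeeping check than a genuine obstacle, is the many-to-one indexing of the tables: the tuple $(u_0,\ldots,u_{k-1})$ has $p^k$ possible values but only encodes the $k-1$ coefficients $\mu_0,\ldots,\mu_{k-2}$. Correctness is preserved because the recurrence makes the $\mu$'s a deterministic function of the $u$'s, so each index unambiguously names a single stored polynomial; the redundancy costs memory but not semantics. The coordinate $u_{k-1}$ shared between the two lookups is consistent by construction, since both tables read it from the same slot of the $u$-loop.
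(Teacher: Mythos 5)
Your proposal is correct and follows essentially the same route as the paper: invoke Theorem~\ref{thm:REDQ} to get $u_i=\sum_{j=i}^{2k-2}\mu_j q^{j-i}\bmod p$, recover $\mu_i=u_i-qu_{i+1}\bmod p$, observe that $L$ and $H$ are therefore functions of the index tuples $(u_0,\ldots,u_{k-1})$ and $(u_{k-1},\ldots,u_{2k-2})$ so the tables are well defined, and conclude by adding the two looked-up field elements. The extra remarks you add (exactness of the floating-point dot product under conditions~(\ref{eq:bounds}), and the harmless many-to-one indexing) are consistent with, and slightly more explicit than, the paper's argument.
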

\begin{proof}
There remains to prove that it is possible to compute $L$ and $H$ from
the $u_i$. From the equality above, we see that $\mu_{2k-2}=u_{2k-2}$ and 
$\mu_i = u_i-qu_{i+1} \mod p$, for $i=0..(2k-3)$. 
Therefore a precomputed table of $p^k$ entries, indexed
by $(u_0,\ldots,u_{k-1})$, can provide the representation of 
$$L=\sum_{i=0}^{k-2} (u_{i}-qu_{i+1} \mod p) X^i.$$
Another table with $p^k$ entries, indexed by
$(u_{k-1},\ldots,u_{2k-2})$, can provide the representation of
$$H=u_{2k-2}X^{2k-2}+\sum_{i=k-1}^{2k-3} (u_{i}-q u_{i+1} \mod p) X^{i}.$$

Finally $R = X^{k-1} \times \sum_{i={k-1}}^{2k-2} \mu_i X^{i-k+1}  +
\sum_{i=0}^{k-2} \mu_i X^i $ needs to be reduced modulo the
irreducible
polynomial used to build the field. But, if we are given the
representations of $H$ and $L$ in the field, $R$ is then equal
to their sum inside the field, directly using the internal
representations.
\end{proof}

Table~\ref{tab:comp} recalls the respective complexities of conversion
phase in the two
presented algorithms.
\begin{table}[ht]
\begin{center}
  \begin{tabular}[ht]{|l||c|c|c|}
\hline
   & Alg.~\ref{alg:dqt}& Alg.~\ref{alg:FGDP} &Alg.~\ref{alg:FGDP} \\
Memory 	& $3p^k$ & $6p^k$ &$4p^k+2^{k \lceil \log_2 p\rceil+1}$\\
\hline
\hline
Shift 	& $4k-2$ & $4k-2$ & $4k-2$\\
Add 	& $4k-4$ & $0$   & $2k-1$\\
Axpy 	& $0$  & $4k-3$ & $2k-1$ \\
Div 	& $2k-1$ & $0$  & $0$\\
Table 	& $0$  & $3$ & $3$\\
Red	& $\geq 5k$  & $4$  & $4$\\
\hline
  \end{tabular}
\caption{Complexity of the back and forth conversion between
  extension field and floating point numbers}\label{tab:comp}
\end{center}
\end{table}

\begin{figure*}[hbt]\center
\includegraphics[width=\textwidth*10/11]{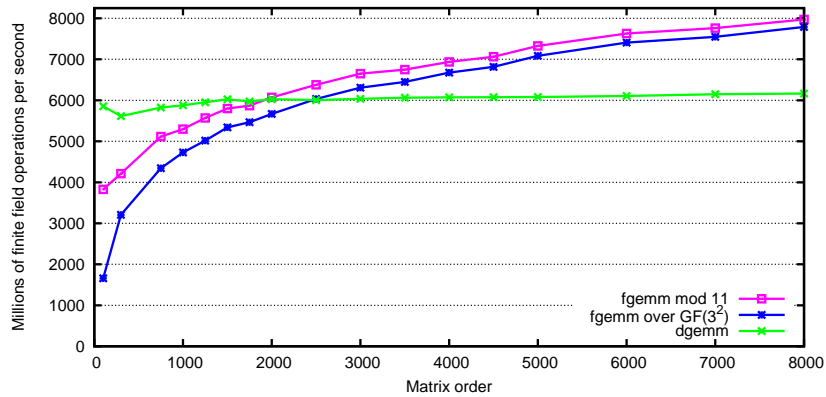}
\caption{Speed of finite field Winograd matrix multiplication on a XEON, 3.6 GHz}\label{fig:gfqgemm}
\end{figure*}

Figure~\ref{fig:conv} shows only the speed of the conversion
after the floating point operations. The log scales prove 
that for $q$ ranging from $2^1$ to
$2^{26}$ (on a 32 bit Xeon) our new implementation is two to three
times faster than the previous one.

\begin{figure}[ht]
\includegraphics[width=\columnwidth]{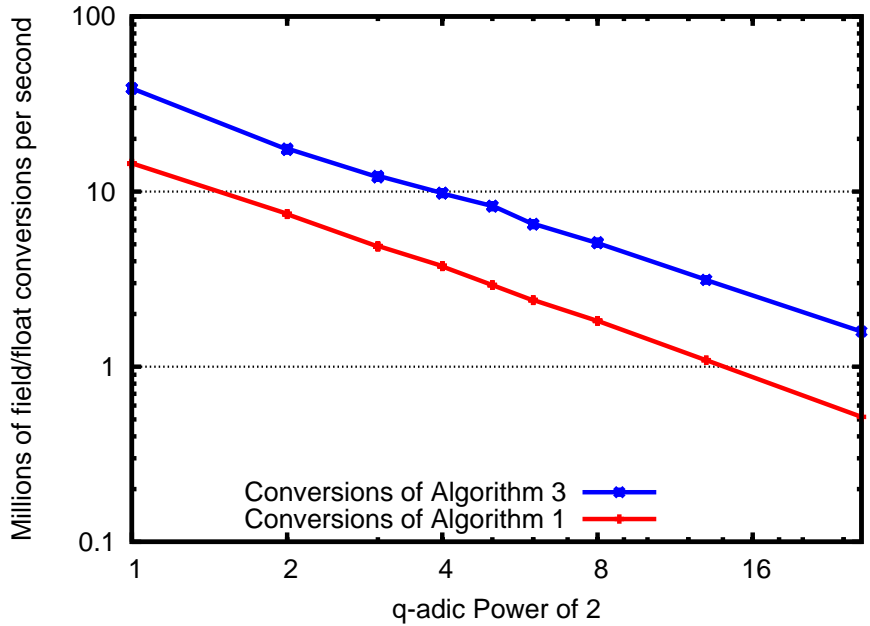}
\caption{Small extension field conversion speed on a Xeon 3.6GHz}\label{fig:conv}
\end{figure}


Furthermore, these improvements 
e.g. allow the extension field routines to reach the
speed of 7800 millions of $\GF[9]$ operations per second (on a XEON,
3.6 GHz, using Goto BLAS-1.09 \verb!dgemm! as the numerical routine
\cite{2002:gotoblas} and FFLAS \verb!fgemm! for the
fast prime field matrix multiplication \cite{jgd:2008:toms}) as shown
on figure~\ref{fig:gfqgemm}. The FFLAS routines are
available within the LinBox 1.1.4 library \cite{LinBox:2007} and the
\fqt is in implemented in the \verb!givgfqext.h! file of the 
Givaro 3.2.9 library \cite{Givaro:2007}.

With these new implementations, the obtained speed-up shown in figure
\ref{fig:gfqgemm}
represents a reduction from the 15 percent overhead of the
previous implementation to less than 4 percent now, when compared to
$\GF[11]$.

\section{Conclusion}
We have proposed a new algorithm for simultaneous reduction of several
residues stored in a single machine word.
For this algorithm we also give a time-memory trade-off implementation
enabling very fast running time if enough memory is available.

We have shown very effective applications of this trick for both
modular polynomial multiplication, and extension fields conversion to
floating point. The latter allows efficient linear algebra routines
over small extension fields but also linear algebra over small prime fields
as shown in \cite{jgd:2008:cmm}.

Further improvements include comparison of running times between
choices for $q$. Indeed our experiments were made with $q$ a
power of two and large table lookup. With $q$ a multiple of $p$ the
table lookup is not needed but divisions by $q^i$ will be more
expensive.

It would also be interesting to see how does the trick extend in
practice to larger precision implementations: on the one hand the basic
arithmetic slows down, but on the other hand the trick enables a more
compact packing of elements (e.g. if an odd number of field elements can
be stored inside two machine words, etc.).


\begin{thebibliography}{10}

\bibitem{jgd:2004:dotprod}
Jean-Guillaume Dumas.
\newblock Efficient dot product over finite fields.
\newblock In Victor~G. Ganzha, Ernst~W. Mayr, and Evgenii~V. Vorozhtsov,
  editors, {\em Proceedings of the seventh International Workshop on Computer
  Algebra in Scientific Computing, Yalta, Ukraine}, pages 139--154. 
Technische {Universit\"at} {M\"unchen}, Germany, 
July 2004.

\bibitem{jgd:2008:cmm}
Jean-Guillaume Dumas, Laurent Fousse, and Bruno Salvy.
\newblock Compressed modular matrix multiplication.
\newblock In {\em Proceedings of  the Milestones in Computer Algebra
  2008, Tobago}, May 2008.

\bibitem{Givaro:2007}
Jean-Guillaume Dumas, Thierry Gautier, Pascal Giorgi, Cl\'ement Pernet,
  Jean-Louis Roch, and Gilles Villard.
\newblock Givaro 3.2.9: C++ library for arithmetic and algebraic computations,
  2007.
\newblock \url{ljk.imag.fr/CASYS/LOGICIELS/givaro}.

\bibitem{jgd:2002:fflas}
Jean-Guillaume Dumas, Thierry Gautier, and Cl\'ement Pernet.
\newblock Finite field linear algebra subroutines.
\newblock In Teo Mora, editor, {\em Proceedings of the 2002 International
  Symposium on Symbolic and Algebraic Computation, Lille, France}, pages
  63--74. 
ACM Press, New York, 
July 2002.\\

\bibitem{jgd:2004:ffpack}
Jean-Guillaume Dumas, Pascal Giorgi, and Cl\'ement Pernet.
\newblock {FFPACK}: Finite field linear algebra package.
\newblock In Jaime Gutierrez, editor, {\em Proceedings of the 2004
  International Symposium on Symbolic and Algebraic Computation, Santander,
  Spain}, pages 119--126. 
ACM Press, New York, 
July 2004.

\bibitem{jgd:2008:toms}
Jean-Guillaume Dumas, Pascal Giorgi, and Cl\'ement Pernet.
\newblock Dense linear algebra over prime fields.
\newblock {\em ACM Transactions on Mathematical Software}, 2009.
\newblock to appear.

\bibitem{VonzurGathen:1999:MCA}
Joachim~{von zur} Gathen and J{\"u}rgen Gerhard.
\newblock {\em Modern Computer Algebra}.
\newblock Cambridge University Press, New York, NY, USA, 1999.

\bibitem{2002:gotoblas}
Kazushige Goto and Robert van~de Geijn.
\newblock On reducing {TLB} misses in matrix multiplication.
\newblock Technical Report TR-2002-55, University of Texas, November 2002.
\newblock FLAME working note \#9.

\vfill\eject

\bibitem{Lefevre:2005:div}
Vincent Lef{\`e}vre.
\newblock The {Euclidean} division implemented with a floating-point division
  and a floor.
\newblock Technical report, {INRIA} {Rh{\^o}ne-Alpes}, 2005.
\newblock \url{http://hal.inria.fr/inria-00000154}.

\bibitem{Lefevre:2005:mul}
Vincent Lef{\`e}vre.
\newblock The {Euclidean} division implemented with a floating-point
  multiplication and a floor.
\newblock Technical report, {INRIA} {Rh{\^o}ne-Alpes}, 2005.
\newblock \url{http://hal.inria.fr/inria-00000159}.

\bibitem{LinBox:2007}
The {LinBox Group}.
\newblock Linbox 1.1.4: Exact computational linear algebra, 2007.
\newblock \url{www.linalg.org}.

\bibitem{Saunders:2001:qadic}
B.~David Saunders.
\newblock Personal communication, 2001.

\bibitem{Shoup:2007:NTL}
Victor Shoup.
\newblock {NTL 5.4.1}: A library for doing number theory, 2007.
\newblock \url{www.shoup.net/ntl}.

\end{thebibliography}

\end{document}